\newcommand{\guardset}{\mathcal{G}}
\newcommand{\polygon}{\mathcal{P}}
\newcommand{\planarsat}{\textsc{Planar Monotone 3Sat}\xspace}
\newcommand{\dispersiveagp}{\textsc{Dispersive AGP}\xspace}
\title{Dispersive Vertex Guarding for Simple and Non-Simple Polygons}
\titlerunning{Dispersive Vertex Guarding for Simple and Non-Simple Polygons}
\author{Sándor P. Fekete}{Department of Computer Science, TU Braunschweig, Braunschweig, Germany}{s.fekete@tu-bs.de}{https://orcid.org/0000-0002-9062-4241}{}
\author{Joseph S. B. Mitchell}{Department of Applied Mathematics and Statistics, Stony Brook University, Stony Brook, NY, USA}{joseph.mitchell@stonybrook.edu}{0000-0002-0152-2279}{}
\author{Christian Rieck}{Department of Computer Science, TU Braunschweig, Braunschweig, Germany}{rieck@ibr.cs.tu-bs.de}{https://orcid.org/0000-0003-0846-5163}{}
\author{Christian Scheffer}{Department of Electrical Engineering and Computer Science, Bochum University of Applied Sciences, Bochum, Germany}{christian.scheffer@hs-bochum.de}{https://orcid.org/0000-0002-3471-2706}{}
\author{Christiane Schmidt}{Department of Science and Technology, Link\"oping University,  Norrk\"oping, Sweden}{christiane.schmidt@liu.se}{https://orcid.org/0000-0003-2548-5756}{}
\authorrunning{S. P. Fekete, J. S. B. Mitchell, C. Rieck, C. Scheffer, and C. Schmidt}
\keywords{Art gallery, dispersion, polygons, NP-completeness, visibility, vertex guards, worst-case optimal}
\begin{document}

    \maketitle

    \begin{abstract}
        We study the \textsc{Dispersive Art Gallery Problem} with vertex guards:
        Given a polygon $\polygon$, with pairwise geodesic Euclidean vertex distance of at least $1$, and a rational number~$\ell$; 
        decide whether there is a set of vertex guards such that $\polygon$ is guarded, and the minimum geodesic Euclidean distance between any two guards (the so-called \emph{dispersion distance}) is at least~$\ell$.
        
        We show that it is \NP-complete to decide whether a~polygon with holes has a set of vertex guards with dispersion distance~$2$.
        On the other hand, we provide an algorithm that places vertex~guards in \emph{simple} polygons at dispersion distance at least~$2$.
        This result is tight, as there are simple polygons in which any vertex guard set has a dispersion distance of at most~$2$.
    \end{abstract}

	% !TeX root = 00-main.tex
\section{Introduction}\label{sec:introduction}
The \textsc{Art Gallery Problem} is one of the fundamental challenges in computational geometry.
It was first introduced by Klee in 1973 and can be stated as follows: 
Given a polygon~$\polygon$ with~$n$ vertices and an integer $k$; decide whether there is a set of at most $k$ many guards, such that these guards see all of $\polygon$, where a guard sees a point if the line segment connecting them is fully contained in the polygon.
Chv\'atal~\cite{chvatal1975combinatorial} and Fisk~\cite{fisk1978short} established tight worst-case bounds by showing that $\lfloor\nicefrac{n}{3}\rfloor$ many guards are sometimes necessary and always sufficient.
On the algorithmic side, Lee and Lin~\cite{LeeL86} proved \NP-hardness; more recently, 
Abrahamsen, Adamaszek, and Miltzow~\cite{agp-exist-r-complete} showed $\exists\mathbb{R}$-completeness, even for simple polygons.

In this paper, we investigate the \textsc{Dispersive AGP} in polygons with vertex guards:
Given a polygon $\polygon$ and a rational number $\ell$, find a set of vertex guards such that $\polygon$ is guarded and the minimum pairwise geodesic Euclidean distance between each pair of guards is at least $\ell$. 
(Note that the cardinality of the guard set does not come into play.)

\subsection{Our Contributions}

We give the following results for the \textsc{Dispersive Art Gallery Problem} in polygons with vertex guards. 

\begin{itemize} 
	\item For polygons with holes, we show \NP-completeness of deciding whether a pairwise geodesic Euclidean distance between any two guards of at least $2$ can be guaranteed. 
	\item For simple polygons, we provide an algorithm for computing a set of vertex guards of minimum pairwise geodesic distance of at least $2$.	
	\item We show that a dispersion distance of~$2$ is worst-case optimal for simple polygons.
\end{itemize}

\subsection{Previous Work}
Many variations of the classic \textsc{Art Gallery Problem} have been investigated~\cite{o1987art,shermer1992recent,Urrutia00}. 
This includes variants in which the number of guards does not play a role, such as the \textsc{Chromatic AGP}~\cite{erickson2010chromatic,FeketeFHM014,IwamotoI20} as well as the \textsc{Conflict-Free Chromatic AGP}~\cite{BartschiGMTW14,BartschiS14,hksvw-ccgoag-18}.

The \textsc{Dispersive AGP} was first introduced by Mitchell~\cite{joeDispersive}, and studied for the special case of polyominoes by Rieck and Scheffer~\cite{RieckS24}. 
They gave a method for computing worst-case optimal solutions with dispersion distance at least $3$ for simple polyominoes, and showed \NP-completeness of deciding whether a polyomino with holes allows a set of vertex guards with dispersion distance of $5$.

\subsection{Preliminaries}
Given a polygon $\polygon$ (possibly with holes), we say that two points $p,q\in \polygon$ \emph{see each other}, if the connecting line segment $\overline{pq}$ is fully contained in
$\polygon$.  A (finite) set of points $\guardset\subset \polygon$ is called a
\emph{guard set} for $\polygon$, if all points of~$\polygon$ are seen by
at least one point of $\guardset$. If $\guardset$ is a subset of the vertices of
$\polygon$, we are dealing with \emph{vertex guards}.

Distances between two points $p,q\in\polygon$ are measured
according to the Euclidean geodesic metric, i.e., that is the length of a shortest path between $p$ and $q$ that stays fully inside of~$\polygon$, and are denoted by $\delta(p,q)$.
The smallest distance between any two guards within a guard set is called its  \emph{dispersion distance}.
	% !TeX root = 00-main.tex
\section{First Observations}\label{sec:observations}

We start with two easy observations; the second resolves an open problem by Rieck and Scheffer~\cite{RieckS24}, who raised the question about the ratio of the cardinalities of guard sets in optimal solutions for the \textsc{Dispersive AGP} and the classical AGP.

\subsection{Shortest Polygon Edges Are Insufficient as Lower Bounds}
To see that an optimal dispersion distance may be considerably shorter than the shortest polygon edge, consider~\Cref{fig:shortestedgeexample}.
Every edge in the polygon has similar length (say, between~$1$ and $1+\varepsilon$). 
To guard the colored regions, one of each of the same colored vertices needs to be in the guard set.
This results in two guards that are arbitrarily close to each other. 

\begin{figure}[htb]
	\centering
	\includegraphics[scale=0.6]{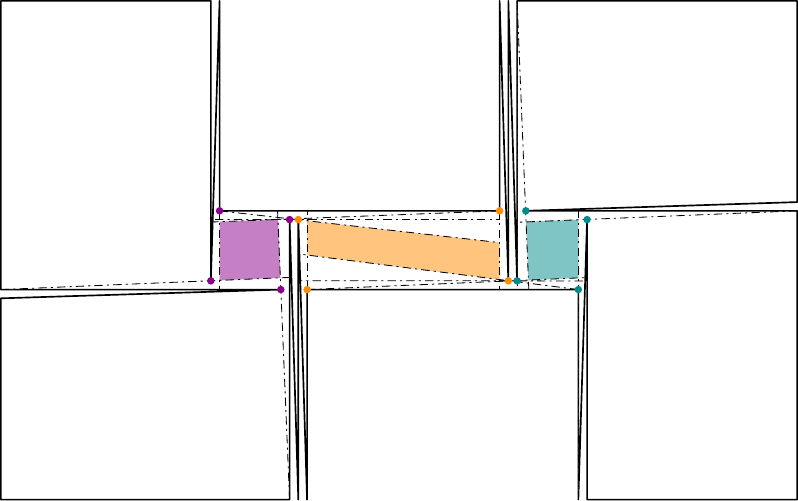}
	\caption{A polygon in which edges have similar length.}
	\label{fig:shortestedgeexample}
\end{figure}

This motivates our assumption that the geodesic distance between any pair of vertices is at least~$1$.

\subsection{Optimal Solutions May Contain Many Guards}
Even for a polygon that can be covered by a small number of guards, an optimal solution for the \textsc{Dispersive AGP} may contain arbitrarily many guards; see~\Cref{fig:dumbbell}. 
An optimal solution for the classical AGP consists of~$2$ guards placed at both ends of the central edge of length~$\varepsilon$.  
On the other hand, we can maximize the dispersion distance in a vertex guard set by placing one guard at the tip of each of the $\nicefrac{(n-2)}{2}$ spikes. 
These two sets have a dispersion distance of $\varepsilon$ and $2\zeta$, respectively, and the ratio $\nicefrac{2\zeta}{\varepsilon}$ can be arbitrarily large.

\begin{figure}[htb]
	\centering
	\includegraphics{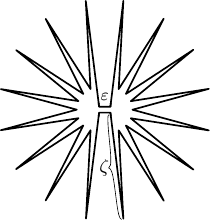}
	\caption{A polygon for which the optimal guard numbers for AGP and \textsc{Dispersive AGP} differ considerably.}
	\label{fig:dumbbell}
\end{figure}

	% !TeX root = 00-main.tex
\section{\NP-Completeness for Polygons with Holes}\label{sec:hardness}

We now study the computational complexity of the \dispersiveagp for vertex guards in non-simple polygons.

\begin{restatable}{theorem}{hardnesswithholesTheorem}\label[theorem]{thm:dispersion-distance-2-np-hard}
	It is \NP-complete to decide whether a polygon with holes and geodesic vertex distance of at least $1$ allows a set of vertex guards with dispersion distance~$2$.
\end{restatable}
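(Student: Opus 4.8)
The plan is to establish membership in \NP\ by a direct certificate argument, and then to prove hardness by a reduction from \planarsat. For membership, a guard set $\guardset$ is a polynomial-size certificate: one verifies in polynomial time that the union of the visibility polygons of the vertices in $\guardset$ covers $\polygon$, and that every pair of guards has geodesic distance at least~$2$, where geodesic distances in a polygon with holes are computable in polynomial time via shortest-path computations. Hence the problem lies in \NP.

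For hardness, I would reduce from \planarsat, exploiting its rectilinear embedding in which the variables lie on a horizontal spine, positive clauses attach from above and negative clauses from below, and the incidences are realized by non-crossing legs. The overall idea is to engineer a polygon with holes in which a large set of guard positions is \emph{forced}, so that the remaining freedom encodes a truth assignment, and in which the dispersion threshold of exactly~$2$ is what both forbids illegal guard placements and couples the forced skeleton to the encoded variables. Concretely, I would introduce narrow spikes (thin pockets) whose apex neighborhood is visible from a unique vertex; guarding such a spike forces a guard at that vertex. Arranging these spikes yields a rigid set of mandatory guards, chosen so that the forced guards are pairwise at geodesic distance at least~$2$ (hence mutually compatible), while every vertex that is \emph{not} part of the intended solution lies within geodesic distance strictly less than~$2$ of some forced guard and is therefore blocked.

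The variable gadget would offer two candidate vertices $t$ and $f$, both of which see a common pocket that must be covered; covering it forces a guard at $t$ or at $f$, encoding the value of the variable, while $t$ and $f$ are each compatible with the surrounding forced guards. A guard at $t$ additionally sees upward into the positive wires and a guard at $f$ downward into the negative wires. Each wire is a corridor relayed by an alternating chain of forced and choice pockets, so that the admissible guard positions along the corridor are shifted by the dispersion constraint precisely according to the transmitted bit. A clause gadget is a pocket placed at the junction of its three incident wires and visible only from the three satisfying wire-end vertices; if at least one incident literal transmits the satisfying signal, that vertex is an available guard covering the clause pocket, and otherwise no vertex can cover it without coming within distance~$2$ of a forced guard. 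Following the gadgets around the planar embedding, a satisfying assignment yields a guard set of dispersion distance~$2$ and, conversely, any such guard set reads off a satisfying assignment; all coordinates are rational and of polynomial size, and a global scaling ensures the required pairwise geodesic vertex distance of at least~$1$.

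The main obstacle is the simultaneous tuning of the visibility structure and the metric so that the single threshold~$2$ does all the work: the thin spikes must pin their forced guards uniquely, the two candidate vertices of each variable must be genuinely exclusive choices (enforced either by a shared pocket or by mutual distance), the wires must transmit each bit without visibility leaks that would spuriously cover a clause pocket, and distant gadgets must not see one another through the holes. Getting every one of these local constraints to hold at exactly dispersion distance~$2$, while keeping all pairwise geodesic vertex distances at least~$1$, is the delicate and calculation-heavy part of the construction.
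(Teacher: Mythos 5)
Your overall strategy coincides with the paper's (membership via a certificate, hardness via \planarsat with the standard rectilinear embedding, corridor ``wires,'' a two-vertex choice per variable, and clause pockets coverable only from satisfying sides), but there are two genuine gaps. First, the membership argument as stated does not go through: you claim that pairwise geodesic distances in a polygon with holes are ``computable in polynomial time,'' but deciding whether a geodesic length---a sum of square roots of rationals---is at least $2$ is an instance of the sum-of-square-roots problem, which is not known to be polynomial-time decidable. The paper's hypothesis that all pairwise vertex distances are at least $1$ is precisely what rescues verification, and your argument never uses it: every segment of a geodesic path joins two polygon vertices, so any path with two or more segments automatically has length at least $2$, leaving only directly visible guard pairs to check, where the comparison reduces to testing one rational squared distance against~$4$. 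Without this (or an equivalent) observation, your \NP\ claim is unsupported.

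Second, the hardness part is an architectural sketch that stops exactly where the proof must begin; you yourself concede that the ``simultaneous tuning'' of visibility and metric at threshold $2$ is unresolved, and that tuning \emph{is} the proof. The paper discharges it with concrete gadgets and verified local lemmas: a variable gadget built from two unit equilateral triangles plus two unit-length spikes, where the spikes cap the achievable dispersion at exactly $2$ and exactly four guard sets are feasible, encoding \texttt{true}/\texttt{false}; a clause gadget that is a unit equilateral triangle with three nearly parallel corridors, where pairwise distance $1$ permits only one guard, which can cover at most two corridors, forcing external satisfaction of the third; and connector and split gadgets, each with its own one-guard argument, to route signals around bends and---crucially---to duplicate a variable's signal for multiple clause occurrences. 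Your proposal has no analogue of the split gadget (fan-out is unaddressed), and your stronger design principle---a globally forced skeleton in which \emph{every} unintended vertex lies within distance strictly less than $2$ of a mandatory guard---is more rigid than necessary and not obviously realizable; the paper avoids any such global rigidity by making each gadget's feasible guard sets locally enumerable. As it stands, the proposal identifies the right reduction framework but establishes neither direction of its correctness.
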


We first observe that the problem is in \NP. For a potential guard set $\guardset$,
we can check the geodesic distance between any pair of vertices $g_1,g_2\in \guardset$ as follows.
Because any two polygon vertices have mutual distance of at least $1$,
a shortest geodesic path between $g_1$ and $g_2$ consisting of at least
two edges has a length of at least~$2$. This leaves checking the
length of geodesic paths consisting of a single edge, which is straightforward.

\subsection{Overview and Gadgets}

For showing \NP-hardness, we utilize the \NP-complete problem \textsc{Planar Monotone 3Sat}~\cite{dbk-obspp-10},
which asks for the satisfiability of a Boolean 3-CNF formula, for which the literals in each 
clause are either all negated or all unnegated, and the corresponding variable-clause incidence graph is planar.

To this end, we construct gadgets to represent (i) variables, (ii) clauses, (iii) a gadget that splits the respective assignment, and (iv) gadgets that connect subpolygons while maintaining the given truth assignment.

\subparagraph*{Variable Gadget.} A \emph{variable gadget} is shown in~\Cref{fig:variablegadget}. 
Its four vertices $v_1, v_2, v_3, v_4$ are placed on the vertices of a rhombus (shown in grey) formed by two adjacent equilateral
triangles of side length~$1$. We add two sharp spikes by connecting two additional vertices
($v_5$ and $v_6$) to the two
pairs $v_1,v_3$ and $v_2,v_4$, respectively; the edges $\{v_3,v_5\},\{v_2,v_6\}$ have unit-length. (The function of these spikes is to impose an upper bound of 2 on
the achievable distance.) We also attach two narrow polygonal corridors to two other pairs of vertices, indicated
in green for the pair $v_1, v_2$, 
and in red for $v_3, v_4$.
These corridors have appropriate width, up to $1$, at the other end,
to attach them to other gadgets.

\begin{figure}[htb]
	\centering
	\includegraphics{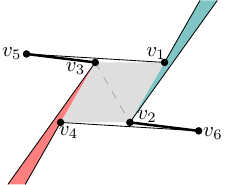}
	\caption{A variable gadget.}
	\label{fig:variablegadget}
\end{figure}

\begin{lemma}\label[lemma]{lem:variablegadget}
	Exactly four vertex guard sets realize a dispersion distance of $2$ to guard the subpolygon ${\mathcal{P}_v = (v_1, v_2, v_6, v_4, v_3, v_5, v_1)}$ of a variable gadget.
\end{lemma}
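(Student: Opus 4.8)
The plan is to determine, for each spike, from which vertices its apex can be guarded, and then combine this with the pairwise geodesic distances among the six vertices of $\mathcal{P}_v$ to enumerate all admissible guard sets.

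First I would establish the visibility constraints imposed by the two sharp spikes. Since the apex angle at $v_5$ is very acute, any sightline reaching a neighborhood of the tip must pass through the narrow mouth $\overline{v_1v_3}$, so among the vertices of $\mathcal{P}_v$ only $v_1$, $v_3$, and $v_5$ itself can see all the way into the tip; symmetrically, the spike at $v_6$ is guardable only from $v_2$, $v_4$, or $v_6$. Hence every guard set must contain at least one vertex from $\{v_1,v_3,v_5\}$ and at least one from $\{v_2,v_4,v_6\}$. I would also record that a single rhombus vertex $v_i\in\{v_1,v_2,v_3,v_4\}$ sees the entire (convex) rhombus body together with the spike incident to it, whereas an apex sees essentially only its own thin spike and a negligible sliver of the body; in particular $\{v_5,v_6\}$ leaves most of the body unguarded and is therefore not a guard set.

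Next I would compute the relevant distances. The four rhombus vertices are pairwise at geodesic distance at most $\sqrt{3}$ (four unit sides, the unit short diagonal, and the long diagonal of length $\sqrt{3}$), all strictly below $2$; consequently any set with dispersion distance $2$ uses at most one of $v_1,v_2,v_3,v_4$. Combined with the previous paragraph — at most one rhombus vertex, and the two apices alone do not guard $\mathcal{P}_v$ — this forces every dispersion-$2$ guard set to consist of exactly one rhombus vertex $m$ together with the apex of the spike \emph{opposite} to $m$ (the spike incident to $m$, and the body, being covered by $m$ itself). This pins down precisely the four candidate sets $\{v_1,v_6\}$, $\{v_3,v_6\}$, $\{v_2,v_5\}$, and $\{v_4,v_5\}$.

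It then remains to verify that each candidate indeed realizes dispersion distance at least $2$. Here a short geodesic computation gives $\delta(v_1,v_6)=\delta(v_4,v_5)=2$: the straight segment leaves $\mathcal{P}_v$, so the taut path bends at the reflex base vertex of the spike and consists of a unit rhombus side chained with the unit spike edge, yielding $1+1$. The remaining two distances, $\delta(v_3,v_6)=\delta(v_2,v_5)$, already exceed $2$ by the Euclidean lower bound on geodesic length. Finally I would note that no third guard can be added to any of these pairs, since every remaining vertex lies within distance $\sqrt{3}<2$ of one of the two chosen guards, so all admissible sets have exactly two elements. Putting this together yields exactly the four claimed sets. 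The step I expect to be the main obstacle is the visibility bookkeeping: making rigorous that each apex sees (essentially) none of the body while a single rhombus vertex sees all of it and its adjacent spike, and simultaneously handling the geodesics correctly, since it is exactly the reflex vertices introduced by the spikes that turn the infeasible straight segment $\overline{v_1v_6}$ into a length-$2$ taut path — a point that must be argued rather than read off the Euclidean distance.
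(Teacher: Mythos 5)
Your overall strategy is the same as the paper's (the paper's proof is a terser version of exactly this enumeration: the rhombus forces at least one guard among $\{v_1,v_2,v_3,v_4\}$, pairwise distances below $2$ forbid a second one, and each spike tip then forces the opposite apex, yielding the four sets $\{v_1,v_6\},\{v_3,v_6\},\{v_2,v_5\},\{v_4,v_5\}$), and most of your bookkeeping is sound. However, one verification step fails against the actual gadget geometry. You seem to have placed the shared side of the two equilateral triangles between $v_1$ and $v_4$; in the gadget of the paper the short (unit) diagonal is $\overline{v_2v_3}$, and each spike is a thin sliver lying almost along the extension of a rhombus side beyond its reflex base vertex. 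As a consequence, the straight-line distance between $v_3$ and $v_6$ is close to $\sqrt{3}$, i.e., strictly \emph{below} $2$, so your claim that $\delta(v_3,v_6)$ and $\delta(v_2,v_5)$ ``already exceed $2$ by the Euclidean lower bound'' is false. These two distances are in fact exactly $2$ --- the paper states explicitly that $v_6$ has distance $2$ from \emph{both} $v_1$ and $v_3$ --- and the correct justification is precisely the taut-path computation you already use for $\delta(v_1,v_6)$ and $\delta(v_4,v_5)$: the geodesic from $v_3$ to $v_6$ bends at the reflex vertex $v_2$ and consists of the unit diagonal $\overline{v_3v_2}$ followed by the unit spike edge $\overline{v_2v_6}$, giving $1+1=2$. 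The patch is immediate with tools you already deploy, but as written the step does not go through.

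A second, smaller slip occurs where you rule out a third guard: you assert that every remaining vertex lies within distance $\sqrt{3}$ of one of the two chosen guards. For the pair $\{v_1,v_6\}$ the leftover apex $v_5$ is at geodesic distance just below $2$ from $v_1$ (the polygon edge $\overline{v_1v_5}$ is shorter than $\delta(v_1,v_3)+\delta(v_3,v_5)=2$ by the triangle inequality in the nondegenerate spike triangle), which exceeds $\sqrt{3}$; symmetrically for $v_6$ and the pair $\{v_4,v_5\}$. The conclusion you need ($<2$, hence no third guard) survives, but the stated bound does not. Apart from these two geometric inaccuracies, your argument is a more detailed and rigorous rendering of the paper's proof --- in particular, your explicit treatment of the spike-tip visibility triples $\{v_1,v_3,v_5\}$ and $\{v_2,v_4,v_6\}$ and the exclusion of $\{v_5,v_6\}$ makes precise what the paper only sketches.
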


\begin{proof}
	As shown in \Cref{fig:variablegadget}, at least one guard has to be placed on one of $\{v_1, v_2, v_3, v_4\}$
to guard the rhombus that represents the variable.
	Conversely, it is easy to see that the dispersion distance is less than $2$ 
if more than one guard is chosen from $\{v_1, v_2, v_3, v_4\}$. Furthermore, if we choose
$v_1$ or $v_3$, the spike at $v_5$ is guarded, and we can choose $v_6$ (which has distance
$2$ from both $v_1$ and~$v_3$) to guard the other spike; conversely, a guard at $v_2$ or $v_4$ 
covers the spike at $v_6$ and allows a guard at $v_5$.
	
Now a guard from $\{v_1,v_2\}$ also covers the green portion of the 
polygon; this will correspond to setting the variable to \texttt{true}.
On the other hand, a guard from $\{v_3, v_4\}$ also
covers the red portion of the polygon, corresponding to setting 
the variable to \texttt{false}. 
\end{proof}

\subparagraph*{Clause Gadget.} A \emph{clause gadget} is depicted in~\Cref{fig:clausegadget}.
Its three vertices lie on the vertices of an equilateral triangle of side length~$1$; 
attached are narrow polygonal corridors, which are nearly
parallel to the triangle edges, each using two of the triangle vertices as end points.
These corridors have appropriate width, up to $1$, at the other end,
to attach them to other~gadgets.

\begin{figure}[htb]
	\centering
	\includegraphics{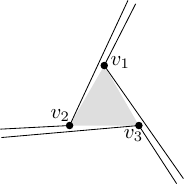}
	\caption{A clause gadget.}
	\label{fig:clausegadget}
\end{figure}

\bigskip
\begin{observation}
	As the vertices $\{v_1,v_2,v_3\}$ have a pairwise distance of $1$, only a single guard can be placed within a clause gadget, if the guard set have to realize a dispersion distance of at least~$2$.
	A direct consequence is that no more than two of the incident corridors can be guarded by a guard placed on these vertices; hence, at least one corridor needs to be seen from somewhere else, which in turn corresponds to satisfying the clause.
\end{observation}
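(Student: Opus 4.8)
The plan is to prove the two assertions of the observation separately --- a metric bound on how many guards can sit inside the triangle, and a visibility bound on how many corridors a single such guard can cover --- and then to read off the implication for satisfiability.

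First I would pin down the metric constraint. The vertices $v_1,v_2,v_3$ are the corners of a unit equilateral triangle that lies inside the gadget, so the geodesic distance between any two of them equals their Euclidean distance, namely~$1$. Consequently, any guard set containing two of these three vertices has dispersion distance at most $1$, which is strictly smaller than the required~$2$. Hence at most one of $\{v_1,v_2,v_3\}$ may belong to a guard set of dispersion distance at least~$2$, which establishes the first sentence of the observation.

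Next I would argue the visibility bound. I would index each corridor by the pair of triangle vertices it is attached to, writing $C_{ij}$ for the corridor with endpoints $v_i$ and $v_j$; there are exactly three, namely $C_{12}$, $C_{13}$, and $C_{23}$. A guard placed at a single vertex, say~$v_1$, is an endpoint only of $C_{12}$ and $C_{13}$. For the remaining corridor $C_{23}$, which is attached along the opposite edge $\overline{v_2v_3}$ and runs nearly parallel to it, I would show that $v_1$ cannot see all of it: as a narrow strip offset to the side of $v_1$'s line of sight, its far reaches are occluded by the corridor walls near $v_2$ and $v_3$. Therefore a single vertex guard covers at most the two corridors incident to it, and never the third.

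Finally I would combine the two bounds. In any valid guard set at most one triangle vertex is used, and it leaves at least one corridor --- the one opposite the chosen vertex, or all three if no triangle vertex is used --- uncovered from the inside. Since all of $\polygon$, and in particular every corridor, must be guarded, such a corridor has to be seen from its far end, i.e., from the gadget attached on the other side; by the construction this happens exactly when the literal routed through that corridor is set to \texttt{true}. Hence at least one literal of the clause is satisfied, which is precisely the claimed consequence. The step I expect to be the main obstacle is the visibility bound: one must verify, from the precise width and placement of the corridors (\emph{narrow} and \emph{nearly parallel} to the triangle edges), that a guard at $v_1$ genuinely fails to see the interior of the opposite corridor rather than only its mouth. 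The metric computation and the final logical read-off are routine once this geometric fact is pinned down.
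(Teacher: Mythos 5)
Your proposal is correct and follows essentially the same reasoning as the paper, which justifies the observation directly from the gadget's design: the pairwise unit distances among $v_1,v_2,v_3$ force at most one guard, and the narrow, nearly parallel corridors ensure a guard at one vertex sees at most the two corridors it is an endpoint of. The visibility step you flag as the main obstacle is treated at exactly the same level in the paper, namely as a property built into the corridor construction (cf.\ \Cref{fig:clausegadget}) rather than proved by a separate computation, so your decomposition into the metric bound, the visibility bound, and the logical read-off matches the intended argument.
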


\subparagraph*{Split Gadget.} A \emph{split gadget} is illustrated in~\Cref{fig:splitgadget}. 
It~has one incoming horizontal polygonal corridor, ending at two vertices ($v_1$ and $v_2$) within vertical distance~$1$. 
These vertices form an equilateral triangle with a third vertex, $v_4$, where the polygon splits into 
two further corridors, emanating horizontally from vertices $v_3, v_6$, and $v_5, v_7$, respectively. 
For the upper corridor, the vertices $v_1, v_3, v_4, v_6$ form slightly distorted adjacent equilateral unit triangles: We move $v_3$ and $v_6$ slightly upwards, such that the edges $\{v_1,v_3\}$ and $\{v_4,v_6\}$ as well as the distances between $v_1$ and~$v_4$ and between $v_3$ and $v_6$ remain $1$, but the distance between $v_3$ and $v_4$ increases to $1+\varepsilon$. 
An analogous construction yields the lower outgoing horizontal corridor. 
Both of these corridors start with a height smaller than $1$, but can end with a height of $1$ or a very small height.
\begin{figure}[htb]
	\centering
	\includegraphics{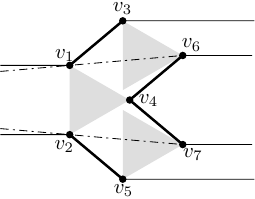}
	\caption{A split gadget.}
	\label{fig:splitgadget}
\end{figure}
\begin{lemma}\label[lemma]{lem:splitgadget}
	The split gadget correctly forwards the respective variable assignment.
\end{lemma}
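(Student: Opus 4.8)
The plan is to treat the split gadget as a repeater that copies the single bit carried by the incoming corridor onto both outgoing corridors, where the bit records whether the corridor is already guarded from its variable-side end (state \texttt{true}) or still needs a local guard at its mouth $v_1,v_2$ (state \texttt{false}). Before splitting into cases I would record the visibility property enforced by the construction: each narrow corridor is seen along its full length only from a guard placed at one of its two mouth vertices, while a guard that is set back behind a mouth sees only a bounded cone and misses the far part. Together with the dispersion bound and the feasible variable states of \Cref{lem:variablegadget}, this reduces the incoming information to one bit, and the top/bottom mirror symmetry of the gadget lets me treat a single representative of each case. I would also note at the outset that, by the chosen angles, the upper rhombus together with the triangle is a convex region (the points $v_6,v_4,v_2$ are collinear), and symmetrically for the lower rhombus; this makes several visibility claims immediate.

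For a \texttt{true} input the mouth $v_1,v_2$ is already covered from the far side and no guard is forced there, so I would exhibit a feasible placement that reproduces \texttt{true} on both outputs: guards at the two opposite mouth vertices $v_3$ and $v_5$. Each is a mouth vertex of its corridor and hence sees down it, so both outputs are guarded from the source, and by the convexity observation $v_3$ (resp.\ $v_5$) sees the whole triangle, so the split region is covered. The only inequality to check is dispersion: the prescribed stretching of the diagonals $v_3v_4$ and $v_5v_4$ to $1+\varepsilon$ moves $v_3$ up and $v_5$ down, so $\delta(v_3,v_5)=2+2\varepsilon>2$ (the connecting segment runs straight through $v_4$ inside the polygon), and both are at distance at least $2$ from the far source guard once the incoming corridor is made long enough. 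This gives the completeness half.

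For a \texttt{false} input the incoming corridor forces a guard at one of its mouth vertices, say $v_1$. By the convexity observation $v_1$ already sees the upper rhombus, the triangle, and---through the shared edge $v_2v_4$---the entire lower rhombus, so $v_1$ alone covers the split region; but, not being a mouth vertex of either outgoing corridor, it does not see down either one, so both corridors must still be guarded from their far ends. It remains to rule out a \emph{second}, source-guarding guard, i.e.\ to show every mouth vertex of an outgoing corridor lies within distance $<2$ of $v_1$: the candidates $v_3,v_4,v_5,v_6$ sit at distances $1,1,\sqrt3,\sqrt{4-(1+\varepsilon)^2}$, all comfortably below $2$, and the single borderline vertex is $v_7$. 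The heart of the proof is this borderline: the undistorted triple $v_1,v_4,v_7$ is collinear with $\delta(v_1,v_7)=2$, whereas stretching $v_5v_4$ shortens the conjugate diagonal $v_2v_7$ and pulls $v_7$ toward $v_1$, so that the straight chord $v_1v_7$ stays inside the polygon (passing just below the reflex apex $v_4$) and shrinks to $\sqrt{4-\varepsilon^2/3}<2$. Hence no source guard can be added, and both outputs remain \texttt{false}.

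Assembling the two cases with the symmetric sub-cases shows that the output bit equals the input bit on both corridors, i.e.\ the gadget forwards the assignment. The hard part is precisely the degenerate metric at the reflex apex $v_4$: in the undistorted gadget the decisive distances $\delta(v_3,v_5)$ and $\delta(v_1,v_7)$ both equal $2$ and the relevant vertex triples are collinear, so the whole argument hinges on the fact that one and the same perturbation pushes $v_3$ and $v_5$ apart (keeping the \texttt{true}-forwarding placement feasible) while simultaneously pulling $v_7$ toward $v_1$ (forbidding the only placement that could leak a spurious \texttt{true} from a forced incoming guard), and on checking that $v_1$ still sees the whole lower rhombus around the reflex corner despite this perturbation.
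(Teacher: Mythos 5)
Your reconstruction matches the paper's proof closely: the same two-case structure (a guarded incoming corridor lets you place two guards at outgoing mouth vertices; an unguarded one forces a guard near the incoming mouth, after which no second gadget guard is feasible), the same distance bookkeeping (your values $1$, $1$, $\sqrt{3}$, $\sqrt{4-(1+\varepsilon)^2}$ from $v_1$ are consistent with the construction), and --- stated even more explicitly than in the paper --- the correct identification that everything hinges on the $\varepsilon$-perturbation destroying the collinear triples $v_1,v_4,v_7$ and $v_2,v_4,v_6$, whose geodesic distance would otherwise be exactly $2$ and would let a forced incoming guard coexist with an outgoing mouth guard. Your \texttt{true}-case placement differs harmlessly: you use the outer mouth vertices $v_3,v_5$ (mutual distance $2+\Theta(\varepsilon)$), while the paper places guards on $v_6$ and $v_7$, whose geodesic bends around the reflex vertex $v_4$ and has length exactly $\delta(v_6,v_4)+\delta(v_4,v_7)=2$; both pairs see the whole subpolygon and both outgoing corridors.

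There is, however, one genuine gap in your soundness direction. You postulate that a narrow corridor is fully seen only from its two mouth vertices, and conclude that a \texttt{false} input forces the guard onto $v_1$ or $v_2$. That postulate fails for $v_4$: although $v_4$ is set back behind the mouth, the mouth $v_1v_2$ spans the full corridor width, so the visibility cone from $v_4$ through it \emph{widens} going into the corridor, and $v_4$ sees the entire incoming corridor. Accordingly, the paper lists $\{v_1,v_2,v_4\}$ as the possible forced locations, and justifies the forcing not by a mouth-only principle but by making the incoming corridor long enough to contain the intersection of the lines through $v_1,v_6$ and through $v_2,v_7$ (these lines intersect only because of the perturbation --- undistorted they are parallel to the corridor walls, so the same $\varepsilon$ you analyzed also underlies this step). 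Your case analysis must therefore also handle a guard at $v_4$; fortunately this is immediate, since every other gadget vertex lies at distance $1$ or $1+\varepsilon<2$ from $v_4$, so no second guard fits and both outgoing corridors remain unguarded, exactly as in your $v_1$ case. With that case added and the mouth-only postulate replaced by the line-intersection argument, your proof is complete and essentially identical to the paper's.
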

\begin{proof}
	We refer to~\Cref{fig:splitgadget} and distinguish two cases.
	First, assume that the variable adjacent to the left is set to \texttt{true}, implying
	that the connecting corridor is already guarded.
	Therefore, two guards placed on $v_6$ and $v_7$ guard the whole subpolygon, and in particular both corridors to the right.

	Now assume that the variable is set to \texttt{false}, implying
	that the corridor to the left is not fully guarded yet. Because 
	this corridor is constructed long enough to contain the intersection of
	the (dotted) lines through $v_1, v_6$ and $v_2, v_7$, we need 
	to place a guard at one of the vertices in $\{v_1, v_2, v_4\}$.
	Then no further guard can be placed in a distance of at least~$2$, and
	the corridors to the right are not guarded, as claimed.
\end{proof}

\subparagraph*{Connector Gadget.} The \emph{connector gadget} is depicted in~\Cref{fig:connectorgadget}. 
The distance between all pairs of vertices is at least~$1$ and less than $2$. 
Furthermore, a guard on either $v_1$ or~$v_2$ cannot see the horizontal corridor, while a guard on $v_3$ or~$v_4$ does not see the vertical one.
\begin{figure}[htb]
	\centering
	\includegraphics{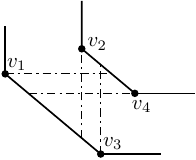}
	\caption{A connector gadget.}
	\label{fig:connectorgadget}
\end{figure}

\begin{observation}
	The gadget is designed such that only a single guard can be placed on its vertices while maintaining a distance of at least $2$.
	If a previously placed guard already sees the vertical corridor, we can place another one to see the horizontal corridor as well. 
	On~the other hand, no guard sees both corridors simultaneously.	
	Thus, we propagate a truth~assignment.
\end{observation}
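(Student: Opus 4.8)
The plan is to read the three constituent assertions of the observation directly off the geometry of the L-shaped connector in \Cref{fig:connectorgadget}, treating the reflex corner where the two corridors meet as a blocker of cross-visibility.

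First I would settle the cardinality claim. By construction every pair of the gadget's vertices has geodesic distance strictly less than $2$, so any two guards both lying on connector vertices would certify a dispersion distance below $2$. Hence at most one guard of a feasible (dispersion-at-least-$2$) set may be placed on the connector's vertices, which is precisely the first assertion. This step is immediate from the stated distance bound and needs no further geometry.

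Next I would prove the dichotomy that \emph{no guard sees both corridors simultaneously}. The two given local properties --- a guard on $v_1$ or $v_2$ cannot see the horizontal corridor, and a guard on $v_3$ or $v_4$ cannot see the vertical corridor --- already partition the vertices into those that can illuminate the vertical corridor and those that can illuminate the horizontal one, with empty overlap. To upgrade this to ``no guard, vertex or otherwise, sees both,'' I would argue that the reflex corner joining the two perpendicular corridors blocks the line of sight: any point sufficiently deep inside one corridor has the corner strictly between it and the far end of the other corridor, so the connecting segment exits $\polygon$. I would make this precise by taking witness points just inside each corridor's far end and checking that the segment between them crosses the corner's reflex wedge. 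This geometric verification is the main obstacle, since it is the only part of the observation not handed to us by the distance bound or by the stated visibility properties.

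Finally I would assemble the propagation statement, which ties the gadget into the global reduction. If a previously placed guard --- sitting in the adjacent subpolygon reached through the vertical corridor --- already illuminates the vertical corridor, then, because the corridors are built long enough that this guard has geodesic distance at least $2$ from both $v_3$ and $v_4$ (the same ``long enough'' device used for the split gadget in \Cref{lem:splitgadget}), we may add the single admissible guard on $v_3$ or $v_4$ to illuminate the horizontal corridor without violating dispersion. Conversely, if the vertical corridor is not yet illuminated, the one available guard must be spent on $v_1$ or $v_2$ to cover it, and by the dichotomy the horizontal corridor then remains dark. Thus the connector outputs ``guarded'' on the horizontal side exactly when it receives ``guarded'' on the vertical side, i.e., it faithfully forwards the truth assignment, which together with the other gadget lemmas supplies the last building block needed for the reduction.
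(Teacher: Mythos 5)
Your proposal is correct and follows essentially the same route as the paper, which justifies the observation directly from the gadget's stated design properties --- all pairwise vertex distances lying in $[1,2)$ (giving the single-guard claim) and the visibility split between $\{v_1,v_2\}$ and $\{v_3,v_4\}$ (giving the dichotomy and hence the propagation, with the ``corridors long enough'' device handling dispersion against previously placed guards). The only superfluous part is your effort to extend ``no guard sees both corridors'' to arbitrary non-vertex points: since the problem restricts guards to polygon vertices, the stated properties of $v_1,\dots,v_4$ already cover every admissible guard position, and the paper accordingly treats this as immediate from the gadget's construction.
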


\subsection{Construction and Proof}

We now describe the construction of the polygon for the reduction, and complete the proof.

\hardnesswithholesTheorem*

\begin{proof}
	To show \NP-hardness, we reduce from \planarsat. 
	For any given Boolean formula $\varphi$, we construct a polygon $\polygon_\varphi$ as an instance of \dispersiveagp as follows.
	Consider a planar embedding of the variable-clause incidence graph of $\varphi$, place the variable gadgets in a row, and clause gadgets that only consist of unnegated literals or entirely of negated literals to the top or to the bottom of that row, respectively, as illustrated in~\Cref{fig:planarsatembedding}.
	Furthermore, connect variables to clauses via a couple of connector gadgets, and introduce split gadgets where necessary.
	
	\begin{figure}[htb]
		\centering
		\includegraphics{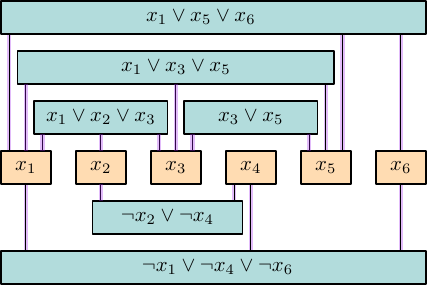}
		\caption{Rectilinear embedding of a \planarsat instance.}
		\label{fig:planarsatembedding}
	\end{figure}
	
	\pagebreak
	\begin{claim}
		If $\varphi$ is satisfiable, then $\polygon_\varphi$ has a vertex guard set with dispersion distance $2$.
	\end{claim}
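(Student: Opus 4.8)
The plan is to convert a satisfying assignment for $\varphi$ into an explicit guard set, placing guards gadget by gadget according to the assignment, and then to check coverage and the dispersion bound as two separate global arguments. Throughout, the intended semantics is that a corridor is \emph{lit} (i.e.\ fully guarded from its variable-side endpoint) exactly when the literal it carries is set to \texttt{true}.

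First I would fix the variable gadgets. For each variable I invoke \Cref{lem:variablegadget}: for a \texttt{true} variable I take one of the two guard sets that also covers the green corridor, and for a \texttt{false} variable one of the two covering the red corridor. Each choice guards the rhombus and both spikes at distance exactly $2$, and lights precisely the corridor leaving toward the clauses that the literal should help satisfy --- green/\texttt{true} toward the positive clauses above, red/\texttt{false} toward the negative clauses below. I would then push these signals outward through the connector and split gadgets: by the connector observation and \Cref{lem:splitgadget}, each such gadget admits guards consistent with its incoming corridor, forwarding a lit corridor to its outgoing corridor(s) (and, in a split, to both branches) while keeping all internal guards at distance at least~$2$.

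The clause gadgets are where satisfiability enters. Because $\varphi$ is satisfied, each clause has at least one \texttt{true} literal, so at least one of its three incident corridors arrives already lit from the variable side; equivalently, at most two literals are \texttt{false}. I would place the single admissible clause guard (by the clause observation) at the triangle vertex that is \emph{not} an endpoint of one lit corridor, so that it covers the (at most two) remaining corridors --- exactly those carrying the \texttt{false} literals. Together with the externally lit corridor this guards all three corridors and the triangle, and since only one guard sits in the triangle the distance constraint is respected. This is precisely the counting that fails when all three literals are \texttt{false}, i.e.\ when the clause is unsatisfied.

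It remains to assemble the two global claims. Coverage is then immediate, since every region of $\polygon_\varphi$ lies in a gadget or corridor handled above, and every corridor is lit either from its variable side (true literal) or from its clause side (the clause guard). For the dispersion distance, the within-gadget bound of $2$ comes from the gadget lemmas and observations, so the only remaining danger is a pair of guards lying in two neighbouring gadgets; here I would argue that any such pair is separated by a full corridor whose geodesic length is at least $2$ by construction, so no cross-gadget pair beats distance~$2$. I expect this last point to be the main obstacle: the gadget lemmas only certify each piece in isolation, and the delicate part is to verify that the shared boundary vertices and corridor endpoints never create an unintended short geodesic, and that a corridor lit from one side is genuinely covered along its entire length --- which is exactly what the metric details of the construction (unit edges, the $1+\varepsilon$ offset in the split, corridor widths up to~$1$, and the corridors being drawn long enough to reach past the dotted-line intersection) are engineered to guarantee.
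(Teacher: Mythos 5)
Your proposal is correct and follows essentially the same route as the paper's proof: translate the satisfying assignment into guard choices in each variable gadget (via \Cref{lem:variablegadget}), propagate through split and connector gadgets, and use satisfiability to place the single admissible guard in each clause gadget so that the remaining corridors are covered. In fact, your write-up is somewhat more explicit than the paper's --- notably in pinpointing which clause vertex to choose and in flagging the cross-gadget dispersion check that the paper leaves implicit in the gadget constructions.
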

	
	\begin{claimproof}
		Given a satisfying assignment, we construct a set of vertex guards with a dispersion distance of $2$: 
		For~every variable that is set to \texttt{true}, we place guards on~$\{v_1, v_6\}$,  and for every variable that is set to \texttt{false}, we place guards on $\{v_4, v_5\}$ within the respective variable gadget.
		Furthermore, we place guards for split and connector gadgets to maintain the given assignments.
		As we have a satisfying assignment, each clause is satisfied by at least one literal, i.e., at least one corridor incident to the clause gadget is already guarded. 
		Therefore, we can place one guard in each clause gadget. 
		This yields a guard set with a dispersion distance of $2$.
	\end{claimproof}
	
	\begin{claim}
		If $\polygon_\varphi$ has a vertex guard set with dispersion distance~$2$, then $\varphi$ is satisfiable.
	\end{claim}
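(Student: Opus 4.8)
The plan is to prove the converse direction by establishing the contrapositive of satisfiability: given any vertex guard set $\guardset$ with dispersion distance $2$, I will extract a truth assignment from the placement of guards in the variable gadgets and argue that this assignment must satisfy $\varphi$. First I would invoke \Cref{lem:variablegadget} to observe that, within each variable gadget, the guard set restricted to that gadget must be one of exactly four configurations, and in each of these the guard covering the rhombus comes either from $\{v_1,v_2\}$ or from $\{v_3,v_4\}$. This dichotomy is precisely what lets me \emph{read off} a consistent truth value for each variable: a guard from $\{v_1,v_2\}$ means the green corridor is guarded from inside the gadget (variable \texttt{true}), while a guard from $\{v_3,v_4\}$ means the red corridor is guarded (variable \texttt{false}). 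The key point is that because the dispersion distance is $2$, no single gadget can simultaneously cover both its corridors from within, so each variable receives a well-defined value.

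Next I would track how this value propagates through the split and connector gadgets toward the clauses. Using \Cref{lem:splitgadget} and the observation about the connector gadget, I would argue that a corridor carrying a \texttt{true} (resp.\ \texttt{false}) signal is fully guarded exactly when the originating variable is set to \texttt{true} (resp.\ \texttt{false}), and that no guard set with dispersion $2$ can ``cheat'' by guarding a corridor that the assignment leaves unguarded. The crucial constraint to exploit is that each gadget is built so that placing a guard to cover one of its outgoing corridors forbids, by the distance-$2$ requirement, placing a second guard nearby; hence the signal is forwarded faithfully and cannot be spuriously created along the way.

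The clause gadgets then furnish the satisfaction argument. By the observation on clause gadgets, at most one guard can sit on the three triangle vertices $\{v_1,v_2,v_3\}$, and such a guard can cover at most two of the three incident corridors. Therefore, to guard the entire clause gadget, at least one of its three incident corridors must already be guarded from the outside, i.e., from the variable side. By the propagation argument, this corridor being guarded corresponds exactly to the associated literal being satisfied under the extracted assignment. Since this holds for every clause gadget, every clause of $\varphi$ contains a satisfied literal, so the assignment satisfies $\varphi$.

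The main obstacle I anticipate is making the propagation argument fully rigorous at the \emph{junctions} where gadgets meet. Individually, each lemma and observation constrains one gadget, but the global conclusion requires that the local guard-placement constraints compose consistently: I must verify that a guard placed to satisfy the distance-$2$ requirement in one gadget does not inadvertently guard a corridor of an adjacent gadget in a way that breaks the correspondence, and conversely that the forced placements do not conflict across shared vertices or overlapping visibility regions. Handling these boundary interactions carefully—so that the ``signal'' read from the variable gadgets is exactly the signal delivered to the clause gadgets—is where the bulk of the case analysis will lie, and I would structure the proof around a per-corridor invariant stating that a corridor is fully guarded if and only if its source literal is \texttt{true}.
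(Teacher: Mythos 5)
Your proposal follows essentially the same route as the paper's own (very terse) proof: read off a truth assignment from the forced guard configurations in the variable gadgets via \Cref{lem:variablegadget}, use \Cref{lem:splitgadget} and the connector-gadget observation to argue faithful propagation, and conclude from the clause-gadget observation that each clause must have an externally guarded corridor, i.e., a satisfied literal. Your per-corridor invariant and attention to junction interactions simply spell out details the paper leaves implicit; no substantive difference or gap.
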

	
	\begin{claimproof}
		As we have a set of vertex guards with a dispersion distance of~$2$, there is only a single guard placed within each clause gadget. 
		Furthermore, no guard set can have larger dispersion distance within a variable gadget.
		As~argued before, there is no guard set with a dispersion distance larger than $2$ in the split and connector gadgets.
		Therefore, the vertex guards placed within the variable gadgets provide a suitable variable assignment for~$\varphi$.
	\end{claimproof}
	
	\medskip
	Given that the problem is in \NP, these two claims complete the proof.
\end{proof}

	\section{Worst-Case Optimality for Simple Polygons}\label{sec:worstcase-optimal}

In this section, we prove that a guard set realizing a dispersion distance of $2$ is worst-case optimal for simple polygons. 
In particular, we describe an algorithm that constructs such guard sets for any simple polygon.

First, we observe that there are polygons for which there is no guard set with a larger dispersion distance.

\begin{observation}\label[Observation]{lem:distance-two-necessary-simple}
	There are simple polygons with geodesic vertex distance at least $1$ for which every guard set has a dispersion distance of at most $2$.
\end{observation}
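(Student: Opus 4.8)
The plan is to exhibit a single concrete witness rather than a whole family: the rhombus-with-two-spikes polygon that already underlies the variable gadget of \Cref{sec:hardness}. Concretely, I would take the simple polygon $\mathcal{P}_v = (v_1, v_2, v_6, v_4, v_3, v_5, v_1)$ of \Cref{fig:variablegadget}, consisting of the rhombus formed by two adjacent equilateral unit triangles on $v_1,\dots,v_4$ together with the two sharp unit-length spikes at $v_5$ and $v_6$. First I would verify that this is a legal instance: all four sides of the rhombus have length $1$, its short diagonal has length $1$ and its long diagonal has length $\sqrt 3$, and the spike edges have length at least $1$, so every pair of vertices has geodesic distance at least $1$, and the hexagon is simple.

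The core of the argument is that guarding the two spikes forces two guards to lie close together. Because the spikes are sharp, the tip $v_5$ can be seen only from the vertices of its own spike, namely $v_1$, $v_3$, or $v_5$, and likewise $v_6$ can be seen only from $v_2$, $v_4$, or $v_6$; these two admissible sets are disjoint. Hence every guard set must contain two \emph{distinct} guards, one from $\{v_1,v_3,v_5\}$ and one from $\{v_2,v_4,v_6\}$. It then remains to check that the geodesic distance between every such cross-pair is at most $2$: the four rhombus vertices are pairwise within $\sqrt 3 < 2$, and the binding case --- a rhombus vertex such as $v_1$ or $v_3$ together with the opposite spike tip $v_6$ --- is at distance exactly $2$. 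This is precisely the upper bound the unit-length spike edges are designed to impose, so no guard set can exceed dispersion distance $2$.

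In fact all of this is already contained in \Cref{lem:variablegadget}, which shows that exactly four guard sets attain dispersion distance $2$ on $\mathcal{P}_v$ and that none does better; so the cleanest route is simply to observe that $\mathcal{P}_v$ is itself a simple polygon of geodesic vertex distance at least $1$ and to invoke that lemma directly. I expect the only genuinely non-routine points to be the visibility claim that each spike tip is seen exclusively from its own spike vertices, together with the verification that the worst-case cross-distance equals $2$; both reduce to choosing the spikes sharp enough, and the $\sqrt 3$-diameter of the rhombus leaves enough slack that the spikes, rather than the rhombus, determine the bound.
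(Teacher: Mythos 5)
Your choice of witness is sound, but your self-contained argument contains a step that fails: it is \emph{not} true that every cross-pair from $\{v_1,v_3,v_5\}\times\{v_2,v_4,v_6\}$ has geodesic distance at most $2$. The pair of the two spike tips violates this. Placing the rhombus at $v_1=(0,0)$, $v_2=(1,0)$, $v_3=(\frac{1}{2},\frac{\sqrt{3}}{2})$, $v_4=(\frac{3}{2},\frac{\sqrt{3}}{2})$, sharpness of the spikes forces $v_5$ near $2v_3-v_1=(1,\sqrt{3})$ and $v_6$ near $2v_2-v_4=(\frac{1}{2},-\frac{\sqrt{3}}{2})$, so already the straight-line distance is $|v_5v_6|\approx\sqrt{7}>2$, and the geodesic (essentially $v_5\to v_3\to v_2\to v_6$) has length about $3$. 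Hence a guard set whose representatives from your two sets are exactly $v_5$ and $v_6$ is not excluded by your cross-pair check. What rescues the argument is precisely the ingredient your check omits: the two tips see only thin slivers hugging the diagonals $v_1v_3$ and $v_2v_4$, so they cannot see the center of the rhombus, and therefore \emph{every} guard set must also contain some $v_i\in\{v_1,v_2,v_3,v_4\}$ --- this is exactly the first step in the paper's proof of \Cref{lem:variablegadget}. Each such $v_i$ has geodesic distance at most $2$ to all five remaining vertices (the extreme cases being $\delta(v_1,v_6)=\delta(v_3,v_6)=2$ and the two-edge paths of length $2$ through $v_3$ or $v_2$), and since your two tip-seeing sets are disjoint, the guard set has at least two guards; taking $v_i$ and any second guard gives dispersion at most~$2$.

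Your fallback --- invoking \Cref{lem:variablegadget} directly --- is the correct repair, since that lemma's proof supplies exactly the rhombus-coverage step above; you should just note explicitly that the lemma is phrased for the subpolygon of a variable gadget in which the sides $v_1v_2$ and $v_3v_4$ open into corridors, and that the same analysis applies verbatim when these are closed boundary edges of a standalone hexagon. For comparison, the paper proves the observation with a different and simpler witness (``Godfried's favorite polygon'', \Cref{fig:distance-two-necessary-simple}): a gray triangle $\Delta$ seen only from its three incident vertices, which are pairwise at distance $1$, so one of them is forced into every guard set and the bound follows in one line; the paper also remarks that modifying the spike ends yields such polygons with any number of vertices, a flexibility your single hexagon does not provide.
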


	Refer to~\Cref{fig:distance-two-necessary-simple}.
	Bold edges have length~$1$. One of the three vertices (with pairwise distance~$1$) 
	incident to the gray triangle $\Delta$ must be picked to guard $\Delta$,
	so no guard set can have a dispersion distance larger that~$2$.
	
	\begin{figure}[htb]
		\centering
		\includegraphics{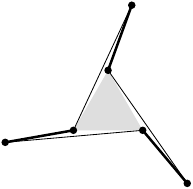}
		\caption{Godfried's favorite polygon.}
		\label{fig:distance-two-necessary-simple}
	\end{figure}
	
	From this, we can easily obtain polygons with \emph{any} number of vertices of
	dispersion distance at most $2$: Simply modify the polygon at the end of each spike.

In the remainder of this section, we provide a polynomial-time algorithm that constructs a guard set with dispersion distance of at least $2$.

We start with a useful lemma that provides some structural properties for the analysis.
Refer to \Cref{fig:auxiliary-layout} for visual reference.

\begin{lemma}\label[lemma]{le:for-contradiction}
Let ${\polygon= (v_1, v_2, \dots, v_7)}$ be a simple polygon with seven  vertices labeled in counterclockwise order.
Assume that the pairwise (geodesic) distance between all
pairs of vertices is at least $1$ and further that the following
properties are satisfied:
\smallskip
\begin{enumerate}
\item The distance between $v_1$ and $v_5$ is $\delta(v_1,v_5)<2$.
\item $v_2, v_4$, and $v_7$ are reflex, i.e., the interior angle at these vertices is strictly larger than~$180^\circ$.
\end{enumerate}
\smallskip
Then the geodesic distance between the two vertices $v_3$ and $v_6$ is $\delta(v_3,v_6)\geq 2$.
\end{lemma}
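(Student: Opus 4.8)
My plan is to argue by contradiction: I assume $\delta(v_3,v_6)<2$, combine this with Property~1, and derive a purely arithmetic contradiction at the point where the two relevant geodesics cross. The engine of the argument is the elementary fact already used to place the problem in \NP: a shortest geodesic between two polygon vertices that consists of at least two edges has length at least~$2$, since every edge of such a path joins two distinct polygon vertices and hence has length at least~$1$. Equivalently, whenever two vertices \emph{fail to see each other}, their geodesic distance is at least~$2$.

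I would first use this fact to extract two auxiliary lower bounds from the reflex hypothesis (Property~2). Since $v_4$ is reflex and lies between the consecutive vertices $v_3$ and $v_5$ on $\partial\polygon$, the triangle $v_3v_4v_5$ is a pocket lying in the exterior of $\polygon$; consequently the open segment $\overline{v_3v_5}$ is not contained in the interior of $\polygon$, so $v_3$ and $v_5$ do not see each other and $\delta(v_3,v_5)\ge 2$. The reflex vertex $v_7$, lying between the consecutive vertices $v_6$ and $v_1$, yields in exactly the same way $\delta(v_1,v_6)\ge 2$. (Note that the reflex vertex $v_2$ is not needed for this direction.)

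It remains to convert these bounds into a contradiction. Because the four relevant vertices occur in the cyclic order $v_1,v_3,v_5,v_6$ along $\partial\polygon$, the pairs $\{v_1,v_5\}$ and $\{v_3,v_6\}$ interleave, so the two geodesics $\pi(v_1,v_5)$ and $\pi(v_3,v_6)$, being simple arcs inside a simple polygon, must cross at some point~$x$. Setting $a=\delta(v_1,x)$, $b=\delta(v_5,x)$, $c=\delta(v_3,x)$, and $d=\delta(v_6,x)$, the fact that $x$ lies on each geodesic gives $a+b=\delta(v_1,v_5)<2$ (Property~1) and $c+d=\delta(v_3,v_6)<2$ (the assumption), hence $a+b+c+d<4$. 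On the other hand, the triangle inequality through~$x$ together with the two auxiliary bounds gives $b+c\ge\delta(v_3,v_5)\ge 2$ and $a+d\ge\delta(v_1,v_6)\ge 2$, so $a+b+c+d\ge 4$ --- a contradiction.

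I expect the main obstacle to be the two ``soft'' geometric steps on which the arithmetic rests: (i) justifying rigorously that a reflex vertex between two consecutive vertices blocks their mutual visibility (so that the clean ``two edges $\Rightarrow$ length $\ge 2$'' bound applies), and (ii) establishing that the two geodesics actually cross, i.e., that alternation of the endpoints on the boundary forces an interior intersection of geodesics in a simple polygon. Both are intuitively clear from \Cref{fig:auxiliary-layout}, but they are the places where one must be careful; the remaining distance bookkeeping is then immediate.
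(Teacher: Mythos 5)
Your proof is correct, and it takes a genuinely different --- and considerably more elementary --- route than the paper's. The paper first argues that $v_1$ and $v_5$ must see each other along a chord $S$ (a bending geodesic would turn at a reflex vertex, forcing length at least $2$), reduces to the case that $v_3$ and $v_6$ also see each other, and then lower-bounds the distances from $v_6$ and from $v_3$ to $S$ by $\nicefrac{\sqrt{3}}{2}$ and $2-\nicefrac{\sqrt{3}}{2}$, respectively, via coordinate geometry: convex-hull containment of the reflex vertices, coverage of a rectangle by three unit disks, and a trigonometric angle estimate --- a quantitative case analysis that uses all three reflex hypotheses, in particular $v_2$. You instead prove a quadrangle (Monge-type) inequality for crossing geodesics, and both of the ``soft'' steps you flag do go through. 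For (i), note that the \emph{whole} triangle $v_3v_4v_5$ need not be exterior (other parts of the boundary may intrude into it), but the local argument suffices: since the interior angle at $v_4$ exceeds $180^\circ$, points of the triangle near $v_4$ are exterior to $\polygon$, and any path from such a point to infinity within the (open, connected) exterior can leave the triangle only through the relative interior of $\overline{v_3v_5}$, because the other two sides and all three corners lie on $\partial\polygon$; hence that segment contains an exterior point, visibility is blocked, and since every bend of a geodesic is a reflex polygon vertex, the unit lower bound on pairwise vertex distances gives $\delta(v_3,v_5)\geq 2$, and symmetrically $\delta(v_1,v_6)\geq 2$ from $v_7$. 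For (ii), the crossing is even easier than the general interleaving fact: as above, $\delta(v_1,v_5)<2$ forces $\pi(v_1,v_5)$ to be the straight chord $S$, which (exactly as in the paper's own partition into $\polygon'$ and $\polygon''$) separates $v_3$ from $v_6$, so $\pi(v_3,v_6)$ meets $S$ at some point $x$; your bookkeeping $\delta(v_3,v_5)+\delta(v_1,v_6)\leq(b+c)+(a+d)=(a+b)+(c+d)=\delta(v_1,v_5)+\delta(v_3,v_6)<4$ against the lower bound $4$ is then airtight. Your route buys brevity, coordinate-freeness, and a strictly stronger statement --- you never use that $v_2$ is reflex, so the lemma holds under weaker hypotheses --- whereas the paper's computation yields explicit distance-to-chord constants at the cost of a long case distinction.
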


\begin{figure}[htb]
	\centering
	\includegraphics{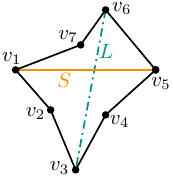}
	\caption{Schematic layout of $\polygon$.}
	\label{fig:auxiliary-layout}
\end{figure}

\begin{proof}
Throughout the proof, we will frequently make use of the assumption that $\delta(v_i,v_j)\geq 1$ for any $i\neq j$. 

As a first step, we argue that $v_1$ and $v_5$ must be mutually visible (along line segment $S$), as shown
in ~\Cref{fig:auxiliary-layout}: Otherwise, a shortest geodesic path
from $v_1$ to $v_5$ must visit one of the reflex vertices $v_2$, $v_4$, or $v_7$, implying
the contradiction $\delta(v_1,v_5)\geq 2$. 

By a similar argument, we claim that $v_3$ and $v_6$ are mutually visible (say, along segment~$L$); otherwise we can conclude that
$\delta(v_3,v_6)\geq 2$, and we are done. 

In the following, we prove that $L$
has length at least~$2$, by establishing the following two auxiliary claims.

\medskip
\descriptionlabel{(a)} The geodesic distance from $v_6$ to $S$ is at least $\nicefrac{\sqrt{3}}{2}$.

\descriptionlabel{(b)} The geodesic distance from $v_3$ to $S$ is at least ${2-\nicefrac{\sqrt{3}}{2}=1.13397\ldots}$.

\begin{figure}[htb]
	\centering
	\begin{subfigure}{0.5\textwidth}
		\centering
		\includegraphics{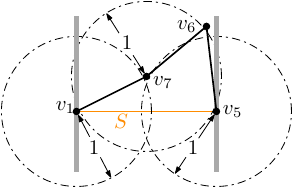}
		\caption{}
		\label{fig:auxiliary-lemma-b}
	\end{subfigure}\hfill
	\begin{subfigure}{0.5\textwidth}%
		\centering
		\includegraphics{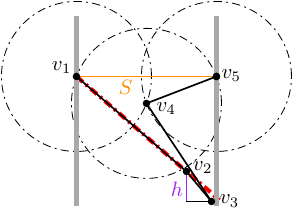}
		\caption{}
		\label{fig:auxiliary-lemma-c}
	\end{subfigure}\hfill
	\caption{Subpolygons for the proof of the two auxiliary claims: (a) The quadrangle $\polygon'=(v_1,v_5,v_6,v_7)$. (b)
The pentagon $\polygon''=(v_1,v_2,v_3,v_4,v_5)$.}
	\label{fig:auxiliary-lemma}
\end{figure}

To this end, assume that the cord $S$ lies horizontally,
with $v_1=(0,0)$ and $v_5=(x_5,0)$,  and partitions $\polygon$ into
two subpolygons, as illustrated in~\cref{fig:auxiliary-lemma}:
(a) the quadrangle ${\polygon':=(v_1,v_5,v_6,v_7)}$ above $S$,
and (b) the pentagon $\polygon'':=(v_1,v_2,v_3,v_4,v_5)$ below $S$. Because~$v_7$ is reflex,
it must lie inside the convex hull of $\polygon'$, which is spanned by the three
remaining vertices $v_1, v_5, v_6$. Analogously, $v_2$ and $v_4$ must lie inside
the convex hull of $\polygon''$, which is spanned by the three remaining vertices
$v_1, v_3, v_5$.

\smallskip
\descriptionlabel{For the auxiliary claim (a)}, refer to~\Cref{fig:auxiliary-lemma-b}. If $v_6=(x_6,y_6)$ 
lies outside the vertical strip defined by $0\leq x\leq x_5$, 
then its closest point on
$S$ is $v_5$ (for $x_6>x_5$) or a~point $q$ for which the geodesic to $S$ runs via $v_7$ (for $x_6<0$), so the minimum distance of~$v_6$ is~$\delta(v_1,v_6)\geq 1$, (or $\delta(v_5,v_6)\geq 1$, respectively). Therefore, the
convex hull of $\polygon'$ must lie within the strip, including $v_7$. 
Furthermore, $v_6$ must have the largest vertical distance from~$S$, so $v_7$ must lie within
the axis-aligned rectangle $R'$ of height $\nicefrac{\sqrt{3}}{2}$ above~$S$.
Consider the three circles $C_1$, $C_5$, and $C_7$ of unit radius around 
$v_1$, $v_5$, and~$v_7$.  It is straightforward to verify that $R'$ is
completely covered by $C_1$, $C_5$, and $C_7$, implying that $v_6$ cannot
lie inside~$R'$, and the first claim follows.

\smallskip
\descriptionlabel{For the auxiliary claim (b)}, refer to~\Cref{fig:auxiliary-lemma-c}.
Without loss of generality, assume that the vertical distance $-y_2$ of $v_2$ from $S$ is not smaller than the vertical distance
$-y_4$ of~$v_4$. Consider the horizontal positions $x_2, x_3, x_4$ of $v_2, v_3, v_4$. 
Because $v_2$ lies inside the convex hull of $\polygon''$, 
the assumption $x_2<0$ (which differs from the figure) implies that $x_3\leq x_2<0$; 
then the shortest distance from $v_2$ to $S$ is $\delta(v_2,v_1)\geq 1$,
and (because $v_2$ is reflex), a~shortest geodesic path from $v_3$ to $S$ 
passes through $v_2$, so $\delta(v_3,v_1)=\delta(v_3,v_2)+\delta(v_2,v_1)\geq 2$. Analogously, we can assume that 
$x_4\leq x_5$. Furthermore, the assumption on the 
relative vertical positions of $v_2$ and~$v_4$ implies that $v_4$
must also lie to the right of $v_1$, i.e., $x_4\geq 0$.

\smallskip
Consider $x_2\geq s$ and refer to~\Cref{fig:outlier}. 
Then the assumption $-y_2\leq \nicefrac{\sqrt{3}}{2}$ (together with 
$\delta(v_2,v_5)\geq 1$) implies that $x_2\geq x_5+\nicefrac{1}{2}$. Furthermore,
$v_4$ must lie above the edge $(v_1,v_2)$. 

\begin{figure}[htb]
	\centering
	\includegraphics{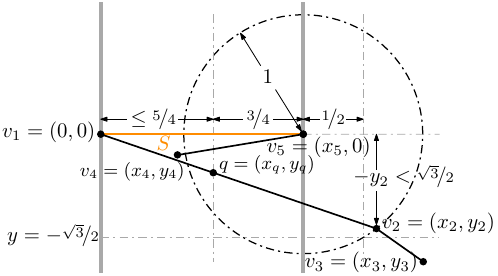}
	\caption{Estimating $\delta(v_2,v_4)$.}
	\label{fig:outlier}
\end{figure}

We now consider the point $q=(x_5-\nicefrac{3}{4}, y_q)$.
As $x_5< 2$, we conclude ${-y_q\leq \nicefrac{-y_2}{2}\leq  \nicefrac{\sqrt{3}}{4}}$.
Therefore, $\delta(q,v_5)\leq\sqrt{(\frac{4}{3})^2+(\frac{\sqrt{3}}{4})^2}=0.96824\ldots<1$.
Because $v_4$ must lie outside of the circle with radius~$1$ around~$v_5$, we conclude that
$x_4<x_5-\nicefrac{3}{4}$, implying that ${\delta(v_2,v_4)\geq \nicefrac{5}{4}=1.25}$.
Furthermore, $v_2$ cannot lie on the convex hull of $\polygon''$, thus, 
$x_3>x_2$ and $y_3<y_2$, implying ${\delta(v_3,v_4)>\delta(v_2,v_4)}$.
As the geodesically shortest path from $v_3$ to $S$ passes through~$v_4$, we conclude that the length of this path, $\delta(v_3,v_4)-y_4\geq \delta(v_3,v_4) \geq \delta(v_2,v_4)$
is bounded from below by $1.25$. Thus, we can assume that $-x_2\geq \nicefrac{\sqrt{3}}{2}$
in this case.

Alternatively, consider $x_2\leq x_5$. Then an argument for $v_1,v_2,v_4,v_5$
analogous to the one from claim (a) for $v_1,v_5,v_6,v_7$
also implies that $-x_2\geq \nicefrac{\sqrt{3}}{2}$.

\medskip
Consider the vertical distance $h:=-y_3+y_2$ between $v_3$ and $v_2$, and refer to~\Cref{fig:angles}.

\begin{figure}[htb]
	\centering
	\includegraphics{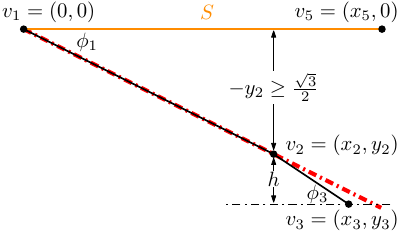}
	\caption{Angles and vertical distances at $v_1$ and $v_3$.}
	\label{fig:angles}
\end{figure}
To this end, note that the angle $\phi_1$ at $v_1$ between $(v_1,v_2)$ and
$S$ satisfies $\tan \phi_1=\nicefrac{-y_2}{x_2}\geq \nicefrac{\sqrt{3}}{4}= \eta$ because of 
$-y_2\geq \nicefrac{\sqrt{3}}{2}$ and $x_2\leq x_5 <2$. Because $v_2$ is reflex, the angle~$\phi_3$ between $(v_3,v_2)$ and a horizontal line at~$v_3$ satisfies
$\phi_3>\phi_1$; moreover, $\sin \phi_3=\frac{h}{\delta(v_2,v_3)}$,
with $\delta(v_2,v_3)\geq 1$, so $h\geq \sin\arctan \eta = \frac{\eta}{\sqrt{1+\eta^2}}=0.3973\ldots$.

This implies that the vertical distance $-y_3$ of $v_3$ to~$S$ (and thus
the distance of $v_3$ to $S$) is at least $\nicefrac{\sqrt{3}}{2}+0.3973=1.26338\ldots>1.13397\ldots=2-\nicefrac{\sqrt{3}}{2}$,
as~claimed.
\end{proof}

We now show the main result of this section.

\begin{theorem}\label[theorem]{thm:distance-two-sufficient-simple}
	For every simple polygon $\polygon$ with pairwise geodesic distance between 
 vertices at least $1$, there exists a guard set that has dispersion distance at least $2$.
\end{theorem}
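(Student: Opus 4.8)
The plan is to reduce the problem to choosing a suitable subset of the vertices of a triangulation of $\polygon$, and to build this subset incrementally, using \Cref{le:for-contradiction} to guarantee that the process never gets stuck. First I would fix any triangulation $T$ of $\polygon$; since $\polygon$ is simple, the dual graph of $T$ is a tree. The key observation for coverage is that every triangle is convex and is therefore seen in its entirety from each of its three vertices. Hence, any vertex set $\guardset$ that contains at least one vertex of every triangle of $T$ already guards all of $\polygon$, which turns the coverage requirement into a purely combinatorial ``transversal'' condition and leaves only the dispersion constraint to control. For the latter, I would reuse the observation from the \NP-membership argument: because all vertices are pairwise at geodesic distance at least $1$, any geodesic between two vertices that bends at an intermediate vertex already has length at least $2$. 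Consequently, a candidate guard set has dispersion distance at least $2$ if and only if no two of its vertices are mutually visible at Euclidean distance strictly less than $2$; only such ``short visible'' pairs can ever be in conflict.

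Next I would process the triangles of $T$ in an order given by a traversal of the dual tree (say, a DFS from an arbitrary root), maintaining a partial guard set $\guardset$ that is $2$-dispersed at all times. Whenever the traversal reaches a triangle that is not yet covered by $\guardset$, I attempt to add one of its three vertices, choosing a vertex whose Euclidean distance to every already-placed, mutually visible guard is at least $2$. If such a vertex exists we add it, and both invariants (coverage of all processed triangles and $2$-dispersion of $\guardset$) are preserved. At the end every triangle is covered, so $\guardset$ guards $\polygon$, and by construction its dispersion distance is at least $2$. All steps---triangulation, the tree traversal, and the geodesic-distance and visibility checks---run in polynomial time, which also yields the claimed algorithm.

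The crux, and the step I expect to be the main obstacle, is showing that the greedy choice never fails: that whenever a triangle is uncovered, at least one of its vertices is far enough from all current guards to be admissible. This is precisely where \Cref{le:for-contradiction} enters. A failure would mean that each of the triangle's vertices is mutually visible to, and within Euclidean distance $<2$ of, some guard, which pins down a local configuration of convex and reflex vertices. I would argue that any such blocked configuration contains seven vertices matching the hypotheses of the lemma: two close, mutually visible vertices in the roles of $v_1$ and $v_5$ with $\delta(v_1,v_5)<2$, separated along one boundary chain by reflex vertices $v_2,v_4$ (with $v_3$ the only intervening convex vertex) and along the other by the reflex vertex $v_7$ (with $v_6$ the only intervening convex vertex). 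The lemma then forces $\delta(v_3,v_6)\ge 2$, exhibiting a conflict-free pair that could have been placed and contradicting the assumption that the step was blocked. Making this reduction precise---identifying which boundary vertices play the roles of $v_1,\dots,v_7$ and verifying that the reflexivity and the distance hypothesis $\delta(v_1,v_5)<2$ genuinely hold in every blocked instance---is the delicate part, while the geometric estimate inside \Cref{le:for-contradiction} supplies the rest.
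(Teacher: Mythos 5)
Your proposal reproduces the skeleton of the paper's argument (triangulate, use the dual tree, place guards incrementally, invoke \Cref{le:for-contradiction} to rule out blocked configurations), but the step you yourself flag as the crux---that every blocked greedy step contains seven vertices matching the hypotheses of \Cref{le:for-contradiction}---is precisely the hard part, and your sketch of it does not go through. Three concrete problems. First, the lemma is a statement about a \emph{heptagon} with a rigid pattern: $v_2,v_4,v_7$ reflex and $v_3,v_6$ the \emph{only} intervening vertices on their respective chains. In a blocked step of an arbitrary DFS, the three vertices of the uncovered triangle may each conflict with a different guard placed long ago, and the boundary chains between a conflicting pair can contain arbitrarily many vertices, convex and reflex, in any order; nothing forces a subpolygon matching the lemma's hypotheses, so your claimed extraction of roles $v_1,\dots,v_7$ is unsubstantiated. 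Second, even where the lemma does apply, its conclusion $\delta(v_3,v_6)\geq 2$ concerns a \emph{single} pair, whereas admissibility of a candidate guard requires distance at least $2$ to \emph{every} previously placed guard; exhibiting one far pair does not un-block the step. Third, your invariant permits arbitrary admissible choices at earlier triangles, but the paper's case analysis shows the choices must be coordinated: guards are frequently placed at vertices \emph{not} belonging to the triangle currently being covered (for instance, a guard on $f$ chosen because it sees triangles $5$, $6$, and $7$ simultaneously, or a guard on $e$ paired with $a$ to cover triangle $8$), exactly to prevent later dead ends. A greedy that always picks a vertex of the uncovered triangle has no such protection, and you give no argument that a blocked state is unreachable.

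What the paper adds to make the idea work is a locality mechanism: it decomposes the triangulation into \emph{caterpillars} (a leaf-to-leaf path in the dual tree together with its foot triangles) and places guards in a fixed order along the caterpillar path. Because any geodesic between vertices that bends at a reflex vertex already has length at least $2$ (the fact you correctly recall), only the most recently placed guards in the local configuration $a,c,d,e,f,g,h,i$ can ever conflict with a new placement. This reduces ``the greedy never fails'' to a finite, explicit case distinction, and \Cref{le:for-contradiction} is invoked in only two subcases, where the specific visibility failures of those subcases certify the lemma's reflexivity hypotheses (e.g., that $d$, $i$, and $f$ are reflex with respect to the relevant chains) before the lemma delivers a contradiction such as $\delta(a,g)\geq 2$ against the case assumption $\delta(a,g)<2$. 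Recursion then proceeds from foot triangles into fresh caterpillars. To repair your proof you would need either this kind of placement ordering and locality argument, or a substantially more general lemma covering arbitrary blocked configurations; as written, the central claim is asserted rather than proved, and its suggested justification fails.
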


\begin{proof}
Refer to~\Cref{fig:caterpillars,fig:guard-cases} for visual orientation.
By triangulating $\polygon$, we obtain a triangulation~$T$ whose dual graph is a tree~$T'$. 
We consider a path~$\Pi$ between two leaves (say, $t_1$ and $t_k$) in $T'$, and obtain a \emph{caterpillar}~$C'$
by adding as \emph{feet} all vertices adjacent to~$\Pi$; let~$C$ be the
corresponding set of triangles (shown in dark cyan in~\Cref{fig:caterpillars}).

Now the idea is to place guards on vertices of $C$ (that is a subset of the vertices of $\polygon$), aiming to see
all of~$C$. We then consider a recursive subdivision of~$\polygon$ into caterpillars,
by proceeding from foot triangles of covered caterpillars to ears, 
until all of $\polygon$ is covered; this corresponds to the colored subdivision in~\Cref{fig:caterpillars}. 

\begin{figure}[htb]
	\centering
	\includegraphics[page=1]{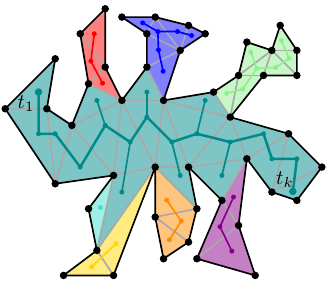}
	\caption{
		Polygon $\polygon$ in black, triangulation $T$ in gray, and a partition into (colored) caterpillars.
	}
	\label{fig:caterpillars}
\end{figure}

To cover $C$, we start by placing a guard on a vertex~$v_0$ of an ear triangle (say, $t_1$). If~$C'$ is a path (i.e., a~caterpillar without foot triangles), we can proceed in a straightforward manner:
Either the next triangles on the path are visible from the guard on $v_0$,
or there is a reflex vertex $v_r$ obstructing the view to a triangle $t_i$. 
In the latter case, we can
place the next guard on an unseen vertex $v_j$ of $t_i$, i.e., $v_j$ is not seen by any of the previously placed guards; by assumption, the distance
of $v_0$ and $v_r$ is at least~$1$, as is the distance of $v_r$ and $v_j$.
Because $v_r$ is reflex, a shortest path from $v_0$ to $v_j$ has length
at least~$2$ by triangle inequality.  

This leaves the case in which we have foot triangles, which is analyzed in the following.
Assume that we already placed a guard on a vertex incident to the path of the caterpillar.  
We argue how we proceed even if all path triangles have incident foot triangles, that is, we show that we can place a set of guards that together monitor all caterpillar triangles, while ensuring a distance of at least~$2$ between any pair of guards.  
Furthermore, whenever we place a guard in a foot triangle, then this guard is never needed to cover any path triangles,  hence,  even if not all path triangles have incident foot triangles, we yield a feasible guard placement.

\begin{figure}[htb]
	\centering
	\includegraphics[page=2]{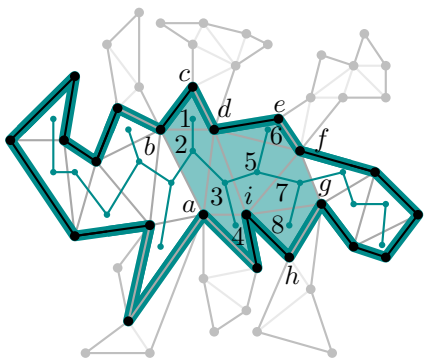}
	\caption{Vertices and triangular faces of a caterpillar for the proof of  \cref{thm:distance-two-sufficient-simple}.}
	\label{fig:guard-cases}
\end{figure}

In the recursive call, we also take into account what previously placed guards see; note that unseen vertices are feasible guard locations with a distance of at least $2$ to all previously placed guards.

To indicate that a vertex $v$ is reflex in the polygonal chain $u,v,w$, we say that \emph{$v$ is reflex w.r.t. $u-w$}; 
note that the polygonal chain $u,v,w$ must not be the polygon boundary. 
The line segment $uv$ contained in $\polygon$ is denoted by $\overline{uv}$; it is either a diagonal or a polygon edge.

Now we consider the situation in~\Cref{fig:guard-cases} and assume that a guard on $a$ has been placed to monitor the triangle to the left of $\overline{ab}$. 
We aim to monitor triangles $1,2,\ldots,8$.  The guard on $a$ sees the triangles
$2$, $3$ and $4$. If $a$ sees $c$, then $a$ sees triangle $1$ as well. If $a$ does not see~$c$, we place a guard on $c$ (in this case either $b$ or~$d$ is reflex w.r.t. $a-c$, thus, $c$ has distance at least~$2$ to $a$).  

\medskip
We now provide a case distinction on the next placement(s) of
guards. In case we placed a guard on $c$ in addition to the guard on $a$,
whenever we consider $a$ seeing vertices, this also includes vertex $c$.

\begin{description}
	\item[1.] If $i$ is reflex w.r.t. $a-g$, we place a guard on $g$; these guards see triangles $5,7$, and~$8$.
	\begin{description}
		\item[(a)] If $a$ or $g$ see $e$, then they also see triangle $6$.
		\item[(b)] Otherwise, we place another guard on $e$ (which has distance of at least $2$ to all guards placed before), which then monitors triangle $6$.
	\end{description}
	\medskip
	\item[2.] Otherwise, i.e., $i$ is not reflex w.r.t. $a-g$:
	\begin{description}
		\item[(a)] If $d$ is reflex w.r.t. $a-e$:
		\begin{enumerate}[{\bf i.}]
			\item If $\delta(a,f)\geq 2$, we place a guard on $f$ to see triangles $5,6,$ and $7$. If $h$ is seen by $a$ or~$f$, then also triangle $8$ is seen. Otherwise, we place a guard on~$h$ to see triangle $8$.
			\item Else if $\delta(a,g)\geq 2$, we place a guard on $g$, then $a$ and $g$ also see triangles $5,7,$ and $8$. If $e$ is seen by $a$ or $g$, then also triangle~$6$ is seen. Otherwise, we place a guard on~$e$ to see triangle $6$.
			\smallskip
			\item[]\hspace*{-1.5cm} {\em In the remaining cases iii.-- vi., we have $\delta(a,f)< 2$, $\delta(a,g)< 2$, thus, $a$ sees both $f$ and $g$.}
			\smallskip
			\item Else if $e$ does not see either $g$ or $h$ (which implies $\delta(e,h)>2$, $\delta(e,g)>2$): \\
			If $a$ does not see $h$, we place a guard on~$h$, which covers triangle $8$. Moreover, we also place a guard on $e$ (which is neither seen from $a$ or $h$), and the guards then also cover triangles $5,6,$ and $7$. 
			Otherwise, i.e., $a$ sees $h$, we place a guard on~$e$ to guarantee that triangles $5,6,7$, and $8$ are seen.
			\item Else if $e$ sees $g$, but does not see $h$:\\
			If $a$ does not see $h$, we place two guards on $e$ and $h$,  the guards together then guard triangles $5,6,7$, and $8$.
			Otherwise, $a$ sees $d,f,g,h,i$ and with that also triangles $5,7$, and $8$; we place a guard on $e$, which sees triangle $6$.
			\item Else if $e$ sees $h$, but does not see $g$:\\
			If $\delta(e,h)>2$, we place a guard on each $e$ and $h$, and thereby cover triangles $5,6,7$, and $8$. 
			If $\delta(e,h)<2$, \Cref{le:for-contradiction} yields a contradiction to $\delta(a,g)<2$ with $v_1=e$, ${v_2=d}$, $v_3=a$, $v_4=i$, $v_5=h$, $v_6=g$, and $v_7=f$.
			\item Else if $e$ sees $g$ and $h$:\\
			If $a$ sees $h$, we place a guard on $e$,  and the guards then cover triangles $5,6,7$, and~$8$. 
			Otherwise, we place a guard on~$h$, and if $h$ sees $f$, triangles $5,\ldots,8$ are seen. If not, we place a guard on~$e$ if $\delta(e,h)>2$ and cover triangles $5,\ldots,8$; otherwise, \Cref{le:for-contradiction} yields a contradiction to $\delta(e,h)<2$ with $v_1=a, v_2=i$, $v_3=h, v_4=g,v_5=f, v_6=e$, and $v_7=d$.
		\end{enumerate}
		\item[(b)] Otherwise, $a$ also sees $f$, hence,  triangles $5,6,$ and $7$ are covered.
		\begin{description}
			\item[i.] If $a$ sees $h$, it also sees triangle $8$.
			\item[ii.] If $a$ does not see $h$, we place a guard on $h$, which then sees triangle $8$.
		\end{description}
	\end{description}
\end{description}
The guards we place in foot triangles are never needed to cover path triangles, hence,  if some of the foot triangles did not exist, we can simply proceed along the caterpillar path (and place a guard there if a triangle is not (completely) seen).
\end{proof}

	\section{Conclusions and Future Work}

We considered the \textsc{Dispersive Art Gallery Problem} with vertex guards, both in simple polygons and in polygons with holes, where we measure distance in terms of geodesics between any two vertices.
We established \NP-completeness of the problem of deciding whether there exists a vertex guard set with a dispersion distance of $2$ for polygons with holes. 
For simple polygons, we presented a method for placing vertex guards with dispersion distance of at least $2$. 
While we do not show \NP-completeness of the problem in simple polygons, we conjecture the following.
\begin{conjecture}
	For a sufficiently large dispersion distance $\ell>2$, it is \NP-complete to decide whether a simple polygon
allows a set of vertex guards with a dispersion distance of at~least~$\ell$.
\end{conjecture}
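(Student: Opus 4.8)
The plan is to establish \NP-completeness by the same two-part strategy used for \Cref{thm:dispersion-distance-2-np-hard}. Membership in \NP\ is immediate and essentially unchanged: a guard set is a polynomial-size certificate, and for a simple polygon we can verify in polynomial time both that it guards $\polygon$ (via visibility computations) and that every pair of guards is at geodesic distance at least $\ell$, since geodesic shortest paths between vertices of a simple polygon are computable in polynomial time. The substance of the conjecture therefore lies in the hardness direction, and I would again reduce from \planarsat.

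The central difficulty is that the reduction of \Cref{thm:dispersion-distance-2-np-hard} relies on holes: the narrow corridors that carry truth values between variable, clause, split, and connector gadgets, together with the walls separating them, are what block the unwanted lines of sight that would otherwise collapse the gadget logic. In a simple polygon these separating walls cannot be realized as holes, so I would instead carve the entire planar embedding of the variable--clause incidence graph into a single simply connected ``comb-like'' region, in which each gadget is a spike or pocket incident to a common backbone, and each corridor is an open channel cut inward from the boundary rather than a tube enclosed by a hole. The reason to work at a dispersion distance $\ell$ strictly larger than $2$ is precisely that \Cref{thm:distance-two-sufficient-simple} makes the problem trivially satisfiable at $\ell=2$; for $\ell>2$ the slack can be spent on two things at once---giving each gadget enough room that its internal vertices are pairwise far enough apart to force the intended unique-up-to-symmetry guard placements, and making the blocking reflex vertices sharp enough that sight-lines behave as in the holes construction.

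Concretely, I would redesign each of the four gadgets so that (i) guarding a gadget still forces a single guard onto a small cluster of mutually close vertices, with the choice among them encoding a Boolean value, and (ii) the dispersion constraint of $\ell$ still prevents two guards in adjacent gadgets unless the shared corridor is already guarded, thereby propagating the assignment exactly as in \Cref{lem:variablegadget}, \Cref{lem:splitgadget}, and the connector observation. The spikes that imposed the upper bound of $2$ in the holes construction would be lengthened so that the relevant achievable distance becomes the chosen $\ell$, and I would recompute the triangle-inequality arguments (a geodesic path bending at a reflex vertex now needs length at least $\ell$) to certify the gadget invariants.

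The step I expect to be the main obstacle is verifying that no spurious long-range visibility appears once all gadgets share a single simply connected interior. In the holes model, a hole reliably blocks every sight-line crossing it; in a simple comb, a guard placed deep in one pocket might see across the backbone into a far pocket through an unintended gap, which could either cover a corridor that should remain unguarded (breaking the tightness of the ``if satisfiable'' direction) or create a second legal guard placement within distance $\ell$ (breaking the dispersion bound). Controlling this requires a careful global geometric layout---staggering the pockets, choosing channel widths and spike angles, and bounding the relevant distances---and I would expect the bulk of the proof, and the determination of exactly how large $\ell$ must be, to consist of these visibility- and distance-bookkeeping lemmas, analogous to but more delicate than \Cref{le:for-contradiction}.
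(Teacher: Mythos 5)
The statement you are addressing is a \emph{conjecture}: the paper explicitly says ``While we do not show \NP-completeness of the problem in simple polygons, we conjecture the following,'' and offers no proof. So there is nothing in the paper to match your argument against---and, more importantly, what you have written is not a proof either. It is a research plan that correctly identifies the central obstacle (in a simple polygon there are no holes to block sight-lines, so a guard deep in one pocket may see into distant gadgets or corridors) and then explicitly defers its resolution: you say the bulk of the proof ``would consist of these visibility- and distance-bookkeeping lemmas'' and that the required value of $\ell$ is yet to be determined. That deferred content is precisely the open problem. Until concrete gadgets are exhibited and lemmas analogous to \Cref{lem:variablegadget}, \Cref{lem:splitgadget}, and \Cref{le:for-contradiction} are actually proved for a simply connected layout, there is no argument to check, and the danger you name is real: \Cref{thm:distance-two-sufficient-simple} shows that simple polygons are globally much better behaved than polygons with holes (dispersion $2$ is \emph{always} achievable), so it is not even clear that the gadget logic of \Cref{thm:dispersion-distance-2-np-hard} can survive the transition for any constant $\ell$.

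One further caution on the part you call ``immediate'': \NP-membership for $\ell>2$ is not the same routine check as in the paper. The paper's membership argument for $\ell=2$ exploits the fact that any geodesic path with two or more edges automatically has length at least $2$, so only single-edge distances need comparing. For larger $\ell$ you must compare sums of several Euclidean edge lengths---sums of square roots---against a rational threshold; this works out because vertex distances of at least $1$ bound the number of edges on a relevant geodesic by $\lceil\ell\rceil$, so for fixed $\ell$ one compares a sum of a bounded number of square roots, but that observation needs to be made, since comparing sums of unboundedly many square roots is a well-known open problem. In summary: your proposal is a reasonable sketch of how one might attack the conjecture, and it is consistent with the paper's stated intuition, but it contains a genuine gap---the entire hardness construction---which is exactly why the paper states this as a conjecture rather than a theorem.
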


Another open problem is to construct constant-factor approximation algorithms.
This hinges on good lower bounds for the optimum.

Both our work and the paper by Rieck and Scheffer~\cite{RieckS24} consider vertex guards.
This leaves the problem for point guards (with positions not necessarily at polygon vertices)
wide open. Given that the classical AGP for point guards is $\exists\mathbb{R}$-complete~\cite{agp-exist-r-complete},
these may be significantly more difficult to resolve.

    \bibliography{bibliography}

\begin{thebibliography}{10}

\bibitem{agp-exist-r-complete}
Mikkel Abrahamsen, Anna Adamaszek, and Tillmann Miltzow.
\newblock The art gallery problem is {$\exists\mathbb{R}$}-complete.
\newblock {\em Journal of the {ACM}}, 69(1):4:1--4:70, 2022.
\newblock \href {https://doi.org/10.1145/3486220} {\path{doi:10.1145/3486220}}.

\bibitem{BartschiGMTW14}
Andreas B{\"{a}}rtschi, Subir~Kumar Ghosh, Mat{\'{u}}s Mihal{\'{a}}k, Thomas
  Tschager, and Peter Widmayer.
\newblock Improved bounds for the conflict-free chromatic art gallery problem.
\newblock In {\em Symposium on Computational Geometry (SoCG)}, pages 144--153,
  2014.
\newblock \href {https://doi.org/10.1145/2582112.2582117}
  {\path{doi:10.1145/2582112.2582117}}.

\bibitem{BartschiS14}
Andreas B{\"{a}}rtschi and Subhash Suri.
\newblock Conflict-free chromatic art gallery coverage.
\newblock {\em Algorithmica}, 68(1):265--283, 2014.
\newblock \href {https://doi.org/10.1007/s00453-012-9732-5}
  {\path{doi:10.1007/s00453-012-9732-5}}.

\bibitem{chvatal1975combinatorial}
Vasek Chv{\'a}tal.
\newblock A combinatorial theorem in plane geometry.
\newblock {\em Journal of Combinatorial Theory}, 18(1):39--41, 1975.
\newblock \href {https://doi.org/10.1016/0095-8956(75)90061-1}
  {\path{doi:10.1016/0095-8956(75)90061-1}}.

\bibitem{dbk-obspp-10}
Mark de~Berg and Amirali Khosravi.
\newblock Optimal binary space partitions for segments in the plane.
\newblock {\em International Journal on Computational Geometry and
  Applications}, 22(3):187--206, 2012.
\newblock \href {https://doi.org/10.1142/S0218195912500045}
  {\path{doi:10.1142/S0218195912500045}}.

\bibitem{erickson2010chromatic}
Lawrence~H. Erickson and Steven~M. LaValle.
\newblock A chromatic art gallery problem.
\newblock Technical report, University of Illinois, 2010.

\bibitem{FeketeFHM014}
S{\'{a}}ndor~P. Fekete, Stephan Friedrichs, Michael Hemmer, Joseph S.~B.
  Mitchell, and Christiane Schmidt.
\newblock On the chromatic art gallery problem.
\newblock In {\em Canadian Conference on Computational Geometry (CCCG)}, 2014.
\newblock URL: \url{http://www.cccg.ca/proceedings/2014/papers/paper11.pdf}.

\bibitem{fisk1978short}
Steve Fisk.
\newblock A short proof of {C}hv{\'a}tal's watchman theorem.
\newblock {\em Journal of Combinatorial Theory}, 24(3):374, 1978.
\newblock \href {https://doi.org/10.1016/0095-8956(78)90059-X}
  {\path{doi:10.1016/0095-8956(78)90059-X}}.

\bibitem{hksvw-ccgoag-18}
Frank Hoffmann, Klaus Kriegel, Subhash Suri, Kevin Verbeek, and Max Willert.
\newblock Tight bounds for conflict-free chromatic guarding of orthogonal art
  galleries.
\newblock {\em Computational Geometry}, 73:24--34, 2018.
\newblock \href {https://doi.org/10.1016/j.comgeo.2018.01.003}
  {\path{doi:10.1016/j.comgeo.2018.01.003}}.

\bibitem{IwamotoI20}
Chuzo Iwamoto and Tatsuaki Ibusuki.
\newblock Computational complexity of the chromatic art gallery problem for
  orthogonal polygons.
\newblock In {\em Conference and Workshops on Algorithms and Computation
  (WALCOM)}, pages 146--157, 2020.
\newblock \href {https://doi.org/10.1007/978-3-030-39881-1\_13}
  {\path{doi:10.1007/978-3-030-39881-1\_13}}.

\bibitem{LeeL86}
D.~T. Lee and Arthur~K. Lin.
\newblock Computational complexity of art gallery problems.
\newblock {\em {IEEE} Transactions on Information Theory}, 32(2):276--282,
  1986.
\newblock \href {https://doi.org/10.1109/TIT.1986.1057165}
  {\path{doi:10.1109/TIT.1986.1057165}}.

\bibitem{joeDispersive}
Joseph S.~B. Mitchell.
\newblock Private communication, 2018.

\bibitem{o1987art}
Joseph O'Rourke.
\newblock {\em Art gallery theorems and algorithms}.
\newblock Oxford New York, NY, USA, 1987.

\bibitem{RieckS24}
Christian Rieck and Christian Scheffer.
\newblock The dispersive art gallery problem.
\newblock {\em Computational Geometry: Theory and Applications}, 117:102054,
  2024.
\newblock \href {https://doi.org/10.1016/j.comgeo.2023.102054}
  {\path{doi:10.1016/j.comgeo.2023.102054}}.

\bibitem{shermer1992recent}
Thomas~C. Shermer.
\newblock Recent results in art galleries (geometry).
\newblock {\em Proceedings of the IEEE}, 80(9):1384--1399, 1992.
\newblock \href {https://doi.org/10.1109/5.163407}
  {\path{doi:10.1109/5.163407}}.

\bibitem{Urrutia00}
Jorge Urrutia.
\newblock Art gallery and illumination problems.
\newblock In {\em Handbook of Computational Geometry}, pages 973--1027, 2000.
\newblock \href {https://doi.org/10.1016/b978-044482537-7/50023-1}
  {\path{doi:10.1016/b978-044482537-7/50023-1}}.

\end{thebibliography}
\end{document}